\theoremstyle{plain}
\newtheorem{lemma}{Lemma}
\newtheorem{theorem}{Theorem}
\newtheorem{assumption}{Assumption}
\theoremstyle{definition}
\newtheorem{corollary}{Corollary}
\theoremstyle{remark}
\newcommand{\proposed}{NCOTA-DGD}
\newcommand{\secref}[1]{Sec.~\ref{#1}}
\newcommand{\e}{\boldsymbol{\epsilon}}
\newcommand{\W}{\hat{\boldsymbol{\Omega}}}
\DeclareMathOperator*{\argmin}{arg\,min}
\title{Decentralized Federated Learning via\\ Non-Coherent Over-the-Air Consensus}
\author{Nicol\`o Michelusi
\thanks{N. Michelusi is with the School of Electrical, Computer and Energy Engineering, Arizona State University.}
\thanks{An extended version of this work appears in \cite{techreport}.}
\thanks{This research has been funded in part by NSF under grant CNS-2129615.}
}
\begin{document}

\maketitle
\setulcolor{red}
\setul{red}{2pt}
\setstcolor{red}
\begin{abstract}
This paper presents \proposed, a Decentralized Gradient Descent (DGD) algorithm that 
combines local gradient descent with a novel Non-Coherent Over-The-Air (NCOTA) consensus scheme to solve 
distributed machine-learning problems over wirelessly-connected systems.
\proposed\ leverages the
waveform superposition properties of the wireless channels:
it enables simultaneous transmissions under half-duplex constraints, by mapping local optimization signals to a mixture of preamble sequences, and consensus
via  non-coherent combining at the receivers.
\proposed\ operates without channel state information at transmitters and receivers, and leverages the average channel pathloss to mix signals,
 without explicit knowledge of the mixing weights (typically known in
 consensus-based optimization algorithms).
It is shown both theoretically and numerically that, for smooth and strongly-convex problems with fixed consensus and learning stepsizes, the updates of \proposed\ converge in Euclidean distance to the global optimum with rate
$\mathcal O(K^{-1/4})$ for a target of $K$ iterations.
\proposed\ is evaluated numerically  over a logistic regression problem, showing
faster convergence vis-\`a-vis running time than  implementations of the classical DGD algorithm over  digital and analog orthogonal channels.
\end{abstract}

\vspace{-2mm}
\section{Introduction}\noindent
\emph{Federated learning} (FL) \cite{McMahan} 
has emerged as a new paradigm to alleviate the communication burden and  privacy concerns associated with the  transmission of raw data to a  ML server, by leveraging decentralized computational and communication resources  at the edge of the network.
Typically, it aims to solve
\begin{align}
\label{global}
{\mathbf w}^*=\argmin_{\mathbf w\in\mathbb R^d}
\ F(\mathbf{w})\triangleq\frac{1}{N}\sum_{i=1}^Nf_i(\mathbf w) \tag{P}  
\end{align}
among $N$ edge devices,
where $f_i(\mathbf w)$ is the empirical loss based on the local dataset of node $i$ (known to $i$ alone), and
 $\mathbf w$ is a $d$-dimensional parameter vector, so that $F(\mathbf w)$ represents the  empirical loss over the network. 
Conventional FL solves \eqref{global} based on a client-server architecture, where the $N$ nodes interact with a parameter server (PS, such as a base station) over multiple rounds
 \cite{Lian17}: in each round, edge devices compute local gradients based on a global model broadcast by the PS,
 and transmit them to the PS; the latter aggregates the local gradients and updates the global model via gradient descent.
 
 Yet, in many important scenarios, a PS  may be lacking \cite{8950073},
 or direct  communication with the PS may be challenging due to severe channel propagation conditions.
In these cases, a decentralized learning architecture may be more attractive, in which
the edge devices communicate with each other without the aid of a PS \cite{8950073}.
A renowned algorithm to solve \eqref{global} in this setting is \emph{Decentralized Gradient Descent} (DGD) \cite{Yuan2016}:
at iteration $k$, each node ($i$) 
updates its local optimization signal $\mathbf w_{i,k}$ by
combining a consensus signal $\mathbf c_{i,k}$ (weighted sum of the neighbors' signals) with a local gradient step, as
\begin{align}
\mathbf w_{i,k+1}
&
=\underbrace{\mathbf w_{i,k}+\sum_{j=1}^N\omega_{i,j}(\mathbf w_{j,k}-\mathbf w_{i,k})}_{\triangleq\mathbf c_{i,k}}
-\eta\nabla f_i(\mathbf w_{i,k});
\label{ddd}
\end{align}
here, $\omega_{i,j}$ are non-negative mixing weights
 with $\omega_{i,j}{=}\omega_{j,i}$ and $\sum_{j=1}^N\omega_{i,j}{=}1$
 ($\omega_{i,j}{=}0$ if $i$ and $j$ do not  communicate directly).

Nevertheless, \eqref{ddd}
relies on communications over orthogonal, noise- and error-free links,
and on knowledge of the mixing weights to aggregate the incoming signals.
In many practical scenarios (e.g.,  swarms of UAVs), communications occur over wireless links: 
interference from simultaneous transmissions, fading and noise may preclude the ability to receive error-free signals. Mitigating these sources of errors typically requires: 1) 
centralized coordination of network scheduling and interference management operations; 
 2) channel state information (CSI) to compensate signal fluctuations and link outages due to fading.
Yet, such coordination may be non-trivial  in  wireless decentralized systems, and CSI acquisition may be impaired by pilot contamination and a source of severe overhead. 
 
 This calls for the design of decentralized optimization schemes that operate over (and leverage properties of) wireless channels.
In this paper, we present \proposed, an implementation of \eqref{ddd} over wireless channels subject to noise, fading, and interference.
Its main feature is a \emph{Non-Coherent} (NC)-\emph{Over-the-Air} (OTA) consensus step that
leverages the waveform superposition properties of the wireless channels, without the need for CSI at transmitters and receivers, and without explicit knowledge of the mixing weights.
 We show theoretically and numerically that,  for smooth and strongly-convex problems with suitable constant \emph{consensus} and \emph{learning} stepsizes, the error (Euclidean distance) between the local models and the solution of \eqref{global}
 converges to zero with rate $\mathcal O(K^{-1/4})$ for a target of $K$ iterations. 

Recent OTA-based schemes under client-server \cite{8952884,9014530,9272666,9382114,9042352} and decentralized \cite{9563232,9517780,9322286} FL  rely on accurate CSI and careful power control  to compensate signal fluctuations due to fading.
Focusing on decentralized FL, the works \cite{9563232,9517780,9322286} use graph coloring to break down the network into smaller non-interfering subgraphs, in which one device operates as the PS. This expedient enables the use of techniques developed for client-server FL (including channel inversion) coupled with a suitable consensus enforcing step. 
Yet, these schemes rely on CSI and power control to invert channels, scheduling operations (e.g., graph coloring),
and knowledge of the network structure and mixing weights $\omega_{i,j}$ for consensus. 
 In contrast, \proposed\ overcomes this need
 by using a set of orthogonal preamble sequences to encode signals, coupled with non-coherent combining at the receivers;
 it leverages  the channel pathloss to mix signals, without explicit knowledge of the mixing weights.
The paper \cite{4511469} studies \emph{consensus} over a shared multipath channel.
While it assumes the channels to be \emph{noiseless} and \emph{static}, our work focuses on \emph{decentralized optimization} over \emph{noisy, time-varying} fading channels.
 
The preamble-based technique developed in this paper is inspired by the preamble-based random access scheme 
 developed in  \cite{9815298}  to encode local gradients in \emph{client-server FL}. The scheme therein relies on noise-free downlink and inversion of the average pathloss at the transmitters. 
  Differently from  \cite{9815298}: 1) \proposed\ operates in decentralized settings, with all links subject to fading and noise;
  2) instead of random access-based preamble selection of \cite{9815298}, \proposed\ 
   maps local signals deterministically to a suitable linear combination of preamble sequences;
 3) rather than inverting channels, \proposed\  
  leverages the average pathloss to mix signals.

 This paper is organized as follows. In \secref{sysmo}, we describe \proposed,
 followed by its convergence analysis in \secref{convanalysis}.
In \secref{numres}, we present numerical results, followed by concluding remarks in \secref{conclu}.
An extended version of this work appears in \cite{techreport}, including
complete proofs, a more general and non-trivial analysis with decreasing stepsizes, more general fading models and frequency selective channels,
stochastic gradient descent updates and extensive numerical~evaluations.

\emph{Notation}: All vectors are in column form.
 For vector $\mathbf a$ (boldface),
 $[\mathbf a]_i$ is its $i$th component, and $\Vert\mathbf a\Vert{=}\sqrt{\mathbf a^{\mathrm H}\mathbf a}$ its Euclidean norm.
 For random vector $\mathbf a$, we define $\Vert\mathbf a\Vert_{\mathbb E}{\triangleq}\sqrt{\mathbb E[\Vert\mathbf a\Vert^2]}$
and $\Vert\mathbf a\Vert_{\mathbb E|A}$ when the expectation is conditional on event $A$.
$\mathbf e_m$ is the standard basis vector with $m$th component equal to 1 and 0 otherwise;
 $\mathbf 1$ and $\mathbf 0$ are vectors of 1's and 0's; their dimension is deduced from the context.
$\mathbf I_n$ is the $n\times n$ identity matrix.
$\mathbbm{1}[A]$ is the indicator of  event $A$.
$\mathbf A\otimes \mathbf B$ is the Kronecker product of matrices $\mathbf A$ and $\mathbf B$.

\vspace{-1mm}
\section{System Model and \proposed}
\label{sysmo}
We consider $N$ wirelessly-connected edge devices, solving \eqref{global}
 via a noisy version of DGD in \eqref{ddd}.
We assume that the optimizer of \eqref{global}, $\mathbf w^*$, lies in a
$d$-dimensional sphere $\mathcal W$ of radius $R$, within which the optimization is restricted (without loss of generality, centered at $\mathbf 0$).
For instance, 
 for strongly-convex $F(\cdot)$ with strong-convexity parameter $\mu$,
 as assumed in the convergence analysis of \secref{convanalysis},
  since $\nabla F(\mathbf w^*){=}\mathbf 0$,
 it holds $\Vert\nabla F(\mathbf 0)\Vert{=}\Vert\nabla F(\mathbf 0){-}\nabla F(\mathbf w^*)\Vert{\geq}\mu \Vert\mathbf 0{-}\mathbf w^*\Vert$,
hence $\mathcal W{\equiv}\{\mathbf w{\in}\mathbb R^d{:}\Vert\mathbf w\Vert{\leq}R\}$, where $R{\triangleq}\frac{1}{\mu}\Vert\nabla F(\mathbf 0)\Vert$ may be initially estimated via a consensus phase.
 
 To solve \eqref{global} iteratively,
we divide time into frames of fixed duration $T$. In frame $k$,
node $i$ generates the transmission signal $\mathbf x_{i,k}$ using the \emph{Preamble-based Encoding} procedure of \secref{encoding},
and  transmits it over the wireless channel. Upon
 receiving the signal from the other nodes in the network (\secref{transmission}), it then computes a consensus
 signal using the \emph{Non-Coherent Over-the-Air} procedure of \secref{otacomputation};
finally, it updates the local optimization variable
 by combining the
 consensus signal with local gradient descent (\secref{optimization}).
Due to randomness of
noise and fading,
this procedure induces a stochastic process
defined on a proper probability space; we denote by $\mathcal F_k$ the $\sigma$-algebra consisting of
all signals generated up to frame $k$ excluded, along with $\mathbf w_{i,k}$.
\vspace{-4mm}
\subsection{Preamble-based Encoding}
\vspace{-1mm}
\label{encoding}
Let $\mathcal Z{\equiv}\{\mathbf z_m{\in}\mathbb R^d{:}m{=}1,\dots,M\}$ be a codebook of $M{=}d{+}1$ codewords and $\mathbf Z=[\mathbf z_1,\dots,\mathbf z_M]$ be the $d{\times}M$ matrix with $m$th column equal to $\mathbf z_m$.
These are defined as $\mathbf z_{d+1}{=}-R\mathbf 1$ and
$\mathbf z_m{=}2Rd\mathbf e_m{-}R\mathbf 1,m=1,\dots,d$. With this choice, 
any $\mathbf w\in\mathcal W$ may be represented as a convex combination of $\mathcal Z$.
To see this, define the convex combination vector $\mathbf p\in\mathbb R^M$ as
\begin{align}
\label{mb}
[\mathbf p]_m{=}\frac{[\mathbf w]_m{+}R}{2Rd},\forall m=1,\dots, d,
\  [\mathbf p]_{d+1}{=}1{-}\sum_{m=1}^d [\mathbf p]_m.
\end{align}
Since $\mathbf w\in\mathcal W$ (i.e., $\Vert\mathbf w\Vert{\leq}R$ and $[\mathbf w]_m{\in}[-R,R]$),
it is straightforward to see that $\mathbf p$ takes value from 
the $M$-dimensional probability simplex, i.e. $\mathbf 1^\top\mathbf p=1$ and $[\mathbf p]_m\geq 0,\forall m$, and
\begin{align}
\label{convcomb}
\mathbf w=\sum_{m=1}^M [\mathbf p]_m\mathbf z_m=\mathbf Z\cdot \mathbf p.
\end{align}
Hence, $\mathbf p$ in \eqref{mb} defines the desired convex combination.

 Let $\mathcal U{=}\{\mathbf u_m{\in}\mathbb C^M:m{=}1,\dots,M\}$ be a set of $M$ orthogonal preamble sequences, 
defined as $\mathbf u_m{=}\sqrt{M}\mathbf e_m$.
Then, given its local optimization signal  $\mathbf w_{i,k}$, node $i$ generates
 the convex combination vector $\mathbf p_{i,k}$ via  \eqref{mb}, and
 the transmission signal 
\begin{align}\label{txsig}
\mathbf x_{i,k}=\sqrt{E}\sum_{m=1}^{M}\sqrt{[\mathbf p_{i,k}]_m}\mathbf u_m,
\end{align}
with average energy per sample $E=\Vert\mathbf x_{i,k}\Vert^2/M$. 

\vspace{-3mm}
\subsection{Transmission over the wireless channel}
\label{transmission}
Each node  then transmits its signal $\mathbf x_{i,k}$ over the wireless channel.
We assume Rayleigh flat fading channels
 $h_{i,j}^{k}\sim\mathcal {CN}(0,\Lambda_{i,j})$ between transmitting node $j$ and receiving node $i$ in frame $k$,
where $\Lambda_{i,j}$ is the  large-scale pathloss. We assume $\Lambda_{i,j}=\Lambda_{j,i}$ (channel reciprocity).
Furthermore, $h_{i,j}^{k}$ is i.i.d. over $k$, and independent across $i$-$j$.

We assume that the nodes operate under a half-duplex constraint.
We thus divide each frame of duration $T$ into $2$ slots; 
each node is assigned to transmit in only one of the $2$ slots, and operates in receive mode in the other slot. 
This assignment is kept fixed during the entire optimization session, and may be
done randomly by each node.
Let $\mathcal N_i$ be the set of nodes that transmit when node $i$ is in receive mode.
Node $i$ thus receives the signal 
\begin{align}
\label{rxsignal}
\mathbf y_{i,k}=\sum_{j\in\mathcal N_i}h_{i,j}^{k}\mathbf x_{j,k}+\mathbf n_{i,k},
\end{align}
where $\mathbf n_{i,k}{\sim}\mathcal {CN}(\mathbf 0,\sigma^2\mathbf I_M)
$ is AWGN noise with variance $\sigma^2$. $\mathbf y_{i,k}$
 is then correlated with the $M$ preamble sequences as
\begin{align}
&r_{i,m}^{k}
=
\frac{\mathbf u_m^{\mathrm H}\mathbf y_{i,k}}{\sqrt{E}\Vert\mathbf u_m\Vert^2}
=
\sum_{j\in\mathcal N_i}h_{i,j}^{k}\sqrt{[\mathbf p_{j,k}]_m}+n_{i,m}^{k};
\label{rxmatched}
\end{align}
$n_{i,m}^{k}\triangleq\mathbf u_m^{\mathrm H}\mathbf n_{i,k}/(\sqrt{E}\Vert\mathbf u_m\Vert^2)\sim\mathcal{CN}(0,\sigma^2/(M\cdot E))$ is the equivalent noise,
i.i.d. over $i,k,m$, due to the orthogonality of preamble sequences.
Since $h_{i,j}^{k}\sim\mathcal {CN}(0,\Lambda_{i,j})$,
one can see that $|r_{i,m}^{k}|^2|\mathcal F_k$ has exponential distribution with mean
\begin{align}
\label{Erm}
&\mathbb E[|r_{i,m}^{k}|^2|\mathcal F_k]
=
\sum_{j\in\mathcal N_i}\Lambda_{i,j}[\mathbf p_{j,k}]_m
+\frac{\sigma^2}{M\cdot E},
\end{align}
\vspace{-3mm}\\
a fact exploited to build the consensus signal in the next step.
\vspace{-6mm}
\subsection{Non-Coherent Over-the-Air Consensus}
\label{otacomputation}
At the end of frame $k$, node $i$ computes the \emph{consensus} signal
\begin{align}
\label{dik}
&\mathbf d_{i,k}=
\sum_{m=1}^M\Big(|r_{i,m}^{k}|^2-\frac{\sigma^2}{M\cdot E}\Big)(\mathbf z_m-{\mathbf w}_{i,k}).
\end{align}
Using \eqref{Erm}, it is straightforward to see that
\begin{align}
\label{Ed}
\mathbb E[{\mathbf d}_{i,k}|\mathcal F_k]
=
\sum_{j\in\mathcal N_i}\Lambda_{i,j}\sum_{m=1}^M[\mathbf p_{j,k}]_m(\mathbf z_m-{\mathbf w}_{i,k}).
\end{align}
\vspace{-2mm}\\
Furthermore, let
$\Lambda^*{\triangleq}\max_i\sum_{j\in\mathcal N_i}\Lambda_{i,j},$
and define the mixing weights
$\omega_{i,j}{=}\frac{\Lambda_{i,j}}{\Lambda^*}{\cdot}\mathbbm{1}[j{\in}\mathcal N_i]$ for $i\neq j$
and $\omega_{i,i}=1-\sum_{j\neq i}\omega_{i,j}$. Using \eqref{convcomb}, we can then rewrite \eqref{Ed} as
\begin{align}
\mathbb E[{\mathbf d}_{i,k}|\mathcal F_k]
=
\Lambda^*\sum_{j=1}^N\omega_{i,j}(\mathbf w_{j,k}-{\mathbf w}_{i,k}).
\end{align}
\vspace{-2mm}\\
We can see that $\mathbf d_{i,k}$ represents a noisy consensus signal reminiscent of \eqref{ddd}, which aggregates and averages the local signals of all the other nodes in the network, with mixing weights $\omega_{i,j}$ proportional to the average pathloss $\Lambda_{i,j}$.
\vspace{-2mm}
\subsection{Local optimization state update}
\label{optimization}
Finally, node $i$ updates $\mathbf w_{i,k}$
 by combining the consensus signal with a local gradient descent (computed in parallel with transmission and reception), followed by a projection onto $\mathcal W$, yielding the \proposed\ update
\begin{align}
\label{updateeq}
\mathbf w_{i,k+1}=\Pi[\mathbf w_{i,k}+\gamma{\mathbf d}_{i,k}-\eta\nabla f_i(\mathbf w_{i,k})],
\end{align}
where $\gamma>0$ and $\eta>0$ are \emph{consensus} and \emph{learning} stepsizes, respectively.
As shown in \secref{convanalysis}, these need to be chosen suitably, in order to mitigate the impact of fading and noise.
The projection operator $\Pi[\mathbf a]$ onto the sphere $\mathcal W$  is defined as
$$\Pi[\mathbf a]=\arg\min_{\mathbf w\in\mathcal W}\Vert\mathbf w-\mathbf a\Vert=
\begin{cases}
\mathbf a & \text{if }\Vert\mathbf a\Vert\leq R,\\
\frac{R}{\Vert\mathbf a\Vert}\mathbf a & \text{otherwise},\\
\end{cases}
$$
and guarantees that
 $\mathbf w_{i,k}\in\mathcal W,\forall i,\forall k$.
The process  described in \secref{encoding} to \secref{optimization} is then repeated in frame $k+1$ with the new local optimization variable $\mathbf w_{i,k+1}$, and so on.
A sketch of the overall \proposed\ algorithm is shown below:

\begin{algorithm} [b]
\caption{\proposed}\label{A1}
    \begin{algorithmic}[1]
        \scriptsize
        \State \textbf{Initialization}: $\mathbf w_{i,0}\in\mathcal W$; transmit slot assignment for each node $i=1,\dots, N$;
        \For{$k=0,1,\dots $, each node}
        \Procedure{}{at node $i$, given $\mathbf w_{i,k}$}
                 \State Compute convex combination $\mathbf p_{i,k}$ via \eqref{mb};
         \State Generate transmission signal $\mathbf x_{i,k}$ via \eqref{txsig};
         \State Transmit $\mathbf x_{i,k}$ on the assigned transmit slot;
         \State Receive $\mathbf y_{i,k}$ on the assigned receive slot (see \eqref{rxsignal});
         \State Compute
         $r_{i,m}^{k}
=
\mathbf u_m^{\mathrm H}\mathbf y_{i,k}/(\sqrt{E}\Vert\mathbf u_m\Vert^2),\ \forall m=1,\dots, M$
\State \ \ \ 
and \emph{consensus} signal
$\mathbf d_{i,k}
$, as in \eqref{dik}; 
\State Update
$
\mathbf w_{i,k+1}=\Pi[\mathbf w_{i,k}+\gamma{\mathbf d}_{i,k}-\eta\nabla f_i(\mathbf w_{i,k})]
$ as in \eqref{updateeq}.
    \EndProcedure
    \State $k\gets k+1$
         \EndFor
    \end{algorithmic}
\end{algorithm}

We now express  \proposed\  as a noisy version of DGD in \eqref{ddd}.
Let
$
\e_{i,k}={\mathbf d}_{i,k}
-\mathbb E[{\mathbf d}_{i,k}|\mathcal F_k]
$
be the error due to fading and noise. Using \eqref{Ed},
we can then rewrite \eqref{updateeq} as
\vspace{-2mm}
\begin{align}
\label{eqx_NF}
\nonumber
\mathbf w_{i,k+1}=\Pi\Big[&
{\mathbf w}_{i,k}
+\gamma\Lambda^*\sum_{j=1}^N\omega_{i,j}(\mathbf w_{j,k}-{\mathbf w}_{i,k})\\&
-\eta\nabla f_i(\mathbf w_{i,k})+\gamma\e_{i,k}\Big]. 
\end{align}
\vspace{-5mm}\\
A few observations are in order:
\begin{enumerate}[leftmargin=0.2in]
\item The mixing weights satisfy $\omega_{i,j}\geq 0,\forall i, j$ and
$\omega_{i,j}=\omega_{j,i}$, since $\Lambda_{i,j}=\Lambda_{j,i}$ (channel reciprocity)
and $\{j\in\mathcal N_i\}\Leftrightarrow\{ i\in\mathcal N_j\}$.
Hence,  $[\boldsymbol{\Omega}]_{i,j}\triangleq\omega_{i,j}$ is a symmetric, doubly-stochastic mixing matrix (as commonly assumed in consensus-based optimization \cite{Yuan2016})
induced by the large-scale propagation conditions of the channel.
\item When $\gamma{=}(\Lambda^*)^{-1}$, $\e_{i,k}{=}\mathbf 0$,
and neglecting the projection operation, \eqref{eqx_NF}
 reduces to the DGD updates \eqref{ddd}.
 Hence, \proposed\ can be interpreted as a 
projected DGD with noisy consensus. The consensus stepsize $\gamma$ helps mitigate the detrimental effect of error propagation due to noise and fading.
Remarkably, unlike \eqref{ddd}, no explicit knowledge of the mixing weights is required in \proposed.
\end{enumerate}
Since each frame includes $2$ slots,  and in each slot $M{=}d{+}1$ samples are transmitted,
the frame duration of \proposed\ is $T=\frac{2(d+1)}{W_{tot}}$, where $W_{tot}$ is 
the bandwidth of the system.
\vspace{-1mm}
\section{Convergence Analysis}
\vspace{-1mm}
\label{convanalysis}
Let $\mathbf W_k=\sum_{i=1}^N\mathbf e_i\otimes \mathbf w_{i,k}$ be the $\mathbf w_{i,k}$-signals stacked  over the network;
similarly, let $\e_k=\sum_{i=1}^N\mathbf e_i\otimes \e_{i,k}$ be the error signals due to fading and noise, stacked over the network.
Let\vspace{-1mm}
\begin{align}\nonumber
f(\mathbf W)=\sum_{i=1}^Nf_i(\mathbf w_i)
\text{ and }\W=\boldsymbol{\Omega}\otimes\mathbf I_d.
\end{align}
\vspace{-3mm}\\
We  then stack the updates \eqref{eqx_NF} as
\begin{align}
\nonumber
\mathbf W_{k+1}{=}\Pi^N\Big[&{\mathbf W}_{k}
{+}\gamma\Lambda^*(\W-\mathbf I_{Nd}){\mathbf W}_{k}-\eta\nabla f(\mathbf W_{k})
+\gamma\e_k\Big],
\end{align}
\vspace{-3mm}\\
where $\Pi^N[\mathbf A]=\arg\min_{\mathbf W\in\mathcal W^N}\Vert\mathbf W-\mathbf A\Vert$ is the projection operator, stacked over the network.
Similarly to \cite{Yuan2016} for the analysis of DGD,
we interpret this update as a noisy centralized projected gradient descent step with stepsize $\eta$ (see \eqref{Lyapnoisy}), based on the Lyapunov function
\begin{align}\nonumber
G(\mathbf W)\triangleq f(\mathbf W)+\frac{\gamma\Lambda^*}{2\eta}{\mathbf W}^\top(\mathbf I_{Nd}-\W){\mathbf W},
\end{align}
\vspace{-4mm}\\
where the second term incentivizes consensus (in fact, it equals zero when $\mathbf w_i=\mathbf w_j,\forall i,j$).
We can then rewrite
\begin{align}
\label{Lyapnoisy}
\mathbf W_{k+1}=\Pi^N[\mathbf W_{k}
-\eta\nabla G(\mathbf W_{k})
+\gamma\e_k
].
\end{align}
\vspace{-4mm}\\
\indent We study the convergence of \proposed\ under the following standard assumptions.
\begin{assumption}
\label{fiassumption}
 $f_i(\mathbf w)$ are  $\mu$-strongly convex and $L$-smooth.
\end{assumption}
\begin{assumption}
\label{distance}
$\zeta{\triangleq}R{-}\Vert\mathbf w^*\Vert>0$
 ($\mathbf w^*$ is in the interior of $\mathcal W$).
\end{assumption}
Since
 $\boldsymbol{\Omega}$ is symmetric and doubly-stochastic, 
its eigenvalues ($\rho_i$ for the $i$th one) are real-valued and
$1{=}\rho_1{\geq}\rho_2{\geq}{\dots}{\geq}\rho_N\geq-1$.
 We make the following standard assumption on $\rho_2$.
\begin{assumption} 
 $
\rho_2<1.
$
\end{assumption}

We are now ready to present the convergence properties of \proposed. The main idea is to decompose the error between the local optimization variables and the global optimum into: 1) the error between $\mathbf W_k$ and the minimizer of the Lyapunov function $G$, defined as
\vspace{-1mm}
\begin{align}
\label{penalized}
\mathbf W^{(G)}=\arg\min_{\mathbf W\in\mathcal W^N} G(\mathbf W);
\end{align}
and 2) the error between the latter and the global optimum $\mathbf w^*$.
Accordingly, 
we bound via the triangle inequality
\begin{align}
\label{int30}
\nonumber
&
\Big(\mathbb E\Big[\frac{1}{N}\sum_{i=1}^N\Vert{\mathbf w}_{i,k}-\mathbf w^*\Vert^2\Big]\Big)^{1/2}=
\frac{1}{\sqrt{N}}\Vert{\mathbf W}_{k}-\mathbf 1\otimes\mathbf w^*\Vert_{\mathbb E}
\nonumber
\\&
\leq\frac{1}{\sqrt{N}}\Vert{\mathbf W}_{k}-\mathbf W^{(G)}\Vert_{\mathbb E}
+\frac{1}{\sqrt{N}}\Vert\mathbf W^{(G)}-\mathbf 1\otimes\mathbf w^*\Vert.
\end{align}
These terms are individually bounded in Theorem \ref{Tmain}. A sketch of its proof is provided  in the Appendix.
 \begin{theorem}\label{Tmain} 
 Assume: ({\bf C1}) $\eta(\mu{+}L){+}\gamma\Lambda^*(1{-}\rho_N){\leq} 2$; ({\bf C2}) $\frac{\eta}{\gamma}{\leq}\frac{\zeta\cdot Z}{\sqrt{N}\nabla_{\max}}$, with $Z{\triangleq}\frac{(1{-}\rho_2)\Lambda^*}{2\sqrt{1{+}L/\mu}}$, $\nabla_{\max}{\triangleq}\underset{i}{\max}\Vert\nabla f_{i}(\mathbf w^*)\Vert$.
Then, 
\begin{align}
\label{L1}
&\!\!\!\!\frac{1}{\sqrt{N}}\Vert{\mathbf W}_{k}{-}\mathbf W^{(G)}\Vert_{\mathbb E}
{\leq} 2R\Big[\frac{\sqrt{2}d}{\sqrt{\mu}}\Big(\Lambda^*{+}\frac{\sigma^2}{E}\Big)\frac{\gamma}{\sqrt{\eta}}{+}e^{-\mu\eta k}\Big],
\\
&\frac{1}{\sqrt{N}}\Vert\mathbf W^{(G)}-\mathbf 1\otimes\mathbf w^*\Vert\leq \frac{\nabla_{\max}}{Z}\frac{\eta}{\gamma}.
\label{L4}
\end{align}
 \end{theorem}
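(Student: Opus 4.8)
\emph{Overall plan.} I would prove the two bounds by exploiting the reformulation \eqref{Lyapnoisy}, which casts the stacked update as a noisy projected gradient step on the Lyapunov function $G$, itself $\mu$-strongly convex and $L_G$-smooth with $L_G=L+\frac{\gamma\Lambda^*}{\eta}(1-\rho_N)$ (indeed $\nabla^2 f\succeq\mu\mathbf I_{Nd}$, while the penalty Hessian $\frac{\gamma\Lambda^*}{\eta}(\mathbf I_{Nd}-\W)$ has eigenvalues in $[0,\frac{\gamma\Lambda^*}{\eta}(1-\rho_N)]$). Condition (\textbf{C1}) is exactly $\eta(\mu+L_G)\le 2$, under which the map $\mathbf W\mapsto\mathbf W-\eta\nabla G(\mathbf W)$ contracts with factor $1-\eta\mu$ by the standard strongly-convex/smooth estimate.

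\emph{Bound \eqref{L1}.} Since $\mathbf W^{(G)}$ is a fixed point of $\Pi^N[\,\cdot-\eta\nabla G(\cdot)]$ and $\Pi^N$ is non-expansive, I would subtract the two updates, square, and take expectations. The increment $\gamma\e_k$ is conditionally zero-mean given $\mathcal F_k$ while $\mathbf W_k$ is $\mathcal F_k$-measurable, so the cross term vanishes, leaving $\Vert\mathbf W_{k+1}-\mathbf W^{(G)}\Vert_{\mathbb E}^2\le(1-\eta\mu)^2\Vert\mathbf W_k-\mathbf W^{(G)}\Vert_{\mathbb E}^2+\gamma^2\mathbb E[\Vert\e_k\Vert^2]$. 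The crux is a uniform bound on $\mathbb E[\Vert\e_k\Vert^2]$. Because the $|r_{i,m}^{k}|^2$ share the fading coefficients $h_{i,j}^{k}$ across preambles, I would sidestep the cross-preamble covariances by Minkowski's inequality applied to $\e_{i,k}=\sum_m(|r_{i,m}^{k}|^2-\mathbb E[|r_{i,m}^{k}|^2|\mathcal F_k])(\mathbf z_m-\mathbf w_{i,k})$, reducing matters to the per-preamble second moments; each $|r_{i,m}^{k}|^2$ is conditionally exponential, so its variance equals $(\mathbb E[|r_{i,m}^{k}|^2|\mathcal F_k])^2$, and by \eqref{Erm} with the simplex identity $\mathbf 1^\top\mathbf p_{j,k}=1$ the means telescope into $\sum_m\mathbb E[|r_{i,m}^{k}|^2|\mathcal F_k]=\sum_{j\in\mathcal N_i}\Lambda_{i,j}+\sigma^2/E\le\Lambda^*+\sigma^2/E$. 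Combined with the bound $\Vert\mathbf z_m-\mathbf w_{i,k}\Vert\le 2\sqrt2 Rd$ implied by the explicit codewords, this gives $\mathbb E[\Vert\e_k\Vert^2]\le 8NR^2d^2(\Lambda^*+\sigma^2/E)^2$. Unrolling the geometric recursion with $1-(1-\eta\mu)^2\ge\eta\mu$, using $\Vert\mathbf W_0-\mathbf W^{(G)}\Vert\le 2R\sqrt N$, and splitting the square root yields \eqref{L1}.

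\emph{Bound \eqref{L4}.} This is deterministic; I would use the first-order optimality of $\mathbf W^{(G)}$ and decompose $\mathbb R^{Nd}$ into the consensus subspace $\mathcal C=\{\mathbf 1\otimes\mathbf v\}$ and its complement, with $\mathbf J$ the block-averaging projector. Testing $\langle\nabla G(\mathbf W^{(G)}),\mathbf 1\otimes\mathbf w^*-\mathbf W^{(G)}\rangle\ge0$ and using $(\mathbf I_{Nd}-\W)(\mathbf 1\otimes\mathbf w^*)=\mathbf 0$ gives $\frac{\gamma\Lambda^*}{\eta}(\mathbf W^{(G)})^\top(\mathbf I_{Nd}-\W)\mathbf W^{(G)}\le\langle\nabla f(\mathbf W^{(G)}),\mathbf 1\otimes\mathbf w^*-\mathbf W^{(G)}\rangle$. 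Since $\nabla f(\mathbf 1\otimes\mathbf w^*)$ has zero block-average ($\nabla F(\mathbf w^*)=\mathbf 0$) it lies in $\mathcal C^\perp$, so the right side is at most $\sqrt N\nabla_{\max}\Vert(\mathbf I_{Nd}-\mathbf J)\mathbf W^{(G)}\Vert$, while the left side is at least $\frac{\gamma\Lambda^*}{\eta}(1-\rho_2)\Vert(\mathbf I_{Nd}-\mathbf J)\mathbf W^{(G)}\Vert^2$; this yields a disagreement bound linear in $\eta/\gamma$, the spectral gap $1-\rho_2$ being precisely what averts a weaker $\sqrt{\eta/\gamma}$ rate. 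Projecting optimality onto $\mathcal C$ annihilates the penalty and gives $\frac1N\sum_i\nabla f_i(\mathbf w_i^{(G)})=\mathbf 0$; combined with $L$-smoothness and $\mu$-strong convexity of $F$, this bounds the consensus error $\Vert\mathbf J\mathbf W^{(G)}-\mathbf 1\otimes\mathbf w^*\Vert$ by a multiple of the disagreement, and adding the two orthogonal components produces \eqref{L4}, the constant $Z$ collecting $1-\rho_2$, $\Lambda^*$ and the conditioning $L/\mu$.

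\emph{Main obstacle.} The delicate step is the consensus component of \eqref{L4}: the identity $\frac1N\sum_i\nabla f_i(\mathbf w_i^{(G)})=\mathbf 0$ requires unconstrained stationarity, i.e. that $\mathbf W^{(G)}$ be interior to $\mathcal W^N$. This is what (\textbf{C2}) secures, for the resulting bound forces $\Vert\mathbf W^{(G)}-\mathbf 1\otimes\mathbf w^*\Vert\le\zeta$, hence $\Vert\mathbf w_i^{(G)}\Vert\le\Vert\mathbf w^*\Vert+\zeta=R$ for every $i$ (Assumption \ref{distance} supplying the slack $\zeta>0$). Certifying the interiority on which the bound relies is thus mildly circular, and I would close it by a bootstrapping/continuity argument in $\eta/\gamma$. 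In \eqref{L1} the analogous care is only that the variance bound hold uniformly over the iterates, which the projection guarantees through $\mathbf w_{i,k}\in\mathcal W$ for all $k$.
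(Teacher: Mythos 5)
Your treatment of \eqref{L1} is correct and is essentially the paper's own argument: the fixed-point property of $\mathbf W^{(G)}$, non-expansiveness of $\Pi^N$, the vanishing cross term (since $\mathbb E[\e_k|\mathcal F_k]=\mathbf 0$ and $\mathbf W_k$ is $\mathcal F_k$-measurable), the identification of ({\bf C1}) with $\eta(\mu+L_G)\le 2$, and the variance bound via Minkowski's inequality, the conditional exponential law of $|r_{i,m}^{k}|^2$, and $\Vert\mathbf z_m-\mathbf w_{i,k}\Vert\le \sqrt{8}Rd$ reproduce Lemma \ref{L0} ($\Sigma=8N[Rd(\Lambda^*+\sigma^2/E)]^2$) and its unrolling exactly.

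For \eqref{L4} you take a genuinely different route (a variational inequality plus an orthogonal consensus/disagreement decomposition, versus the paper's mean value theorem plus an operator bound on $\mathbf B^{-1}$ restricted to $\mathbf v\perp(\mathbf 1\otimes\mathbf I_d)$), but as written it has a real gap, which you flag yourself: the step giving $\frac1N\sum_i\nabla f_i(\mathbf w_i^{(G)})=\mathbf 0$ requires full stationarity $\nabla G(\mathbf W^{(G)})=\mathbf 0$, i.e.\ interiority of $\mathbf W^{(G)}$ in $\mathcal W^N$, and the only thing certifying interiority is the very bound being proven. Deferring this to ``a bootstrapping/continuity argument in $\eta/\gamma$'' leaves the crux unproven: such an argument needs (i) continuity of $\eta/\gamma\mapsto\mathbf W^{(G)}$, (ii) the limit $\mathbf W^{(G)}\to\mathbf 1\otimes\mathbf w^*$ as $\eta/\gamma\to0$ to seed it, and (iii) an open/closed argument over the parameter range allowed by ({\bf C2}), with care at the endpoint where the inequality is not strict; none of this is sketched, and it is precisely the hard content of the statement.

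The paper dissolves the circularity by a reordering you could adopt verbatim, after which your own decomposition becomes legitimate: define $\hat{\mathbf W}\triangleq\arg\min_{\mathbf W\in\mathbb R^{Nd}}G(\mathbf W)$, the \emph{unconstrained} minimizer, for which $\nabla G(\hat{\mathbf W})=\mathbf 0$ holds by definition with no interiority hypothesis. Prove $\Vert\hat{\mathbf W}-\mathbf 1\otimes\mathbf w^*\Vert\le\frac{\sqrt N\nabla_{\max}}{Z}\frac{\eta}{\gamma}$ for $\hat{\mathbf W}$ (the paper does so via the mean value theorem, $\nabla f(\hat{\mathbf W})=\nabla f(\mathbf 1\otimes\mathbf w^*)+\mathbf A(\hat{\mathbf W}-\mathbf 1\otimes\mathbf w^*)$ with $\mu\mathbf I_{Nd}\preceq\mathbf A\preceq L\mathbf I_{Nd}$, together with $\nabla f(\mathbf 1\otimes\mathbf w^*)\perp(\mathbf 1\otimes\mathbf I_d)$; your two-subspace estimate could serve the same purpose once stationarity is free). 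Only then invoke ({\bf C2}) and Assumption \ref{distance} to get $\Vert\hat{\mathbf w}_{i}-\mathbf w^*\Vert\le\zeta$ for all $i$, hence $\hat{\mathbf W}\in\mathcal W^N$, hence $\hat{\mathbf W}=\mathbf W^{(G)}$ by strong convexity of $G$; the implication runs from the bound to feasibility, never the reverse. A final caveat: your assertion that the constants of your decomposition ``collect'' into $Z=\frac{(1-\rho_2)\Lambda^*}{2\sqrt{1+L/\mu}}$ is stated rather than derived; in the paper this constant comes from the explicit (and itself nontrivial) estimate $\max_{\mathbf v\perp(\mathbf 1\otimes\mathbf I_d),\,\Vert\mathbf v\Vert=1}\Vert\mathbf B^{-1}\mathbf v\Vert\le\frac{2\sqrt{1+L/\mu}}{\Lambda^*(1-\rho_2)}\frac{\eta}{\gamma}$, which your sketch would still need to establish in some equivalent form.
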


To minimize these errors, $\gamma/\sqrt{\eta}$ and $\eta/\gamma$ need both be small,
while $\eta$ should be large to make $e^{-\mu\eta k}$ small, yielding a tradeoff between the tuning of 
$\eta$ and $\gamma$.
To further investigate the convergence properties, let us consider a target timeframe $K$ at which the algorithm stops.
It appears then reasonable to choose $\eta= a\cdot K^{-x}$ and $\gamma=b\cdot K^{-y}$
for suitable $a,b,x,y>0$. Under this choice,  \eqref{L1}-\eqref{L4} specialize as
\begin{align*}
\nonumber
&\frac{1}{\sqrt{N}}\Vert{\mathbf W}_{K}{-}\mathbf W^{(G)}\Vert_{\mathbb E}
\leq 
2R\Big[\frac{\Lambda^*{+}\frac{\sigma^2}{E}}{\sqrt{a\mu}}\sqrt{2}dbK^{\frac{x}{2}-y}{+}e^{-\mu a\cdot K^{1-x}}\Big]
,\\&
\frac{1}{\sqrt{N}}\Vert\mathbf W^{(G)}-\mathbf 1\otimes\mathbf w^*\Vert\leq \frac{\nabla_{\max}}{Z} \frac{a}{b}K^{y-x}.
\end{align*}
The exponential term requires $x{<}1$ to converge when $K{\to}\infty$, while
 the other two terms are of order $\mathcal O(K^{\max\{y-x,x/2-y\}})$. Hence,
 $\max\{y-x,x/2-y\}$ should be minimized subject to $x<1$, yielding the following corollary.
 \begin{corollary}
 \label{corol}
 Let $0{<}\epsilon{<}1$.
 With $\eta{\propto} K^{-(1-\epsilon)}$ and $\gamma{\propto} K^{-3/4(1-\epsilon)}$, we have $e^{-\mu a\cdot K^{\epsilon}}{=}\mathcal O(K^{-1/4(1-\epsilon)})$ and
 \vspace{-2mm}
 $$
 \Big(\mathbb E\Big[\frac{1}{N}\sum_{i=1}^N\Vert{\mathbf w}_{i,k}-\mathbf w^*\Vert^2\Big]\Big)^{1/2}
 =\mathcal O(K^{-1/4(1-\epsilon)}).
 $$
 \end{corollary}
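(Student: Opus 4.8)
The plan is to obtain Corollary~\ref{corol} directly from Theorem~\ref{Tmain}: the two bounds \eqref{L1}--\eqref{L4}, inserted into the triangle inequality \eqref{int30}, already expose the entire dependence on $\eta$, $\gamma$ and $k$, so the corollary reduces to choosing the stepsize exponents that minimize the resulting rate. First I would check that the schedule $\eta=aK^{-(1-\epsilon)}$, $\gamma=bK^{-\frac34(1-\epsilon)}$ satisfies hypotheses (C1) and (C2) for all large $K$: since $\eta,\gamma\to0$, the left-hand side of (C1) vanishes and eventually stays below $2$, while $\eta/\gamma=\frac ab K^{-\frac14(1-\epsilon)}\to0$ eventually drops below the positive constant $\zeta Z/(\sqrt N\nabla_{\max})$, so (C2) holds. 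Theorem~\ref{Tmain} then applies at $k=K$.

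Next I would substitute the schedule into \eqref{L1}--\eqref{L4} and track the order of the three surviving terms. With $x=1-\epsilon$ and $y=\frac34(1-\epsilon)$, the consensus-noise term carries $\gamma/\sqrt\eta=\frac b{\sqrt a}K^{x/2-y}$, the bias term carries $\eta/\gamma=\frac ab K^{y-x}$, and the transient term is $e^{-\mu\eta K}=e^{-\mu a K^{\epsilon}}$. The exponents of the first two coincide precisely because $y=\frac34 x$ balances $y-x$ against $x/2-y$ at the common value $-x/4=-\frac14(1-\epsilon)$; pushing $x$ as close to $1$ as the constraint $x<1$ permits makes this exponent as negative as possible, which is what dictates the choice $x=1-\epsilon$. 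The only nontrivial order estimate is the transient term: because $\epsilon>0$ forces $K^{\epsilon}\to\infty$, $e^{-\mu a K^{\epsilon}}$ decays faster than every polynomial, hence it is $\mathcal O(K^{-\frac14(1-\epsilon)})$. Summing three $\mathcal O(K^{-\frac14(1-\epsilon)})$ contributions gives the claim.

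I expect the real difficulty to lie in the two bounds of Theorem~\ref{Tmain} that the corollary invokes, so I would sketch those as well. For \eqref{L1} I would read \eqref{Lyapnoisy} as projected gradient descent, stepsize $\eta$, on the $\mu$-strongly convex, $L_G$-smooth function $G$ with $L_G=L+\frac{\gamma\Lambda^*}{\eta}(1-\rho_N)$; since (C1) is exactly $\eta(\mu+L_G)\le2$, the (non-expansive) projected gradient map is a $(1-\eta\mu)$-contraction toward $\mathbf W^{(G)}$. Using that $\e_{i,k}={\mathbf d}_{i,k}-\mathbb E[{\mathbf d}_{i,k}\mid\mathcal F_k]$ is conditionally zero-mean, the cross term vanishes after conditioning on $\mathcal F_k$, yielding the scalar recursion $b_{k+1}^2\le(1-\eta\mu)^2 b_k^2+\gamma^2\,\mathbb E[\Vert\e_k\Vert^2]$ for $b_k=\Vert\mathbf W_k-\mathbf W^{(G)}\Vert_{\mathbb E}$, which unrolls into an $e^{-\mu\eta k}$ transient plus a $\gamma/\sqrt\eta$ noise floor (using $1-(1-\eta\mu)^2\ge\eta\mu$). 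The crux is the uniform second-moment bound $\mathbb E[\Vert\e_{i,k}\Vert^2\mid\mathcal F_k]\le 8R^2d^2(\Lambda^*+\sigma^2/E)^2$ matching the constant in \eqref{L1}; it follows from the exponential law of $|r_{i,m}^{k}|^2$ (variance equal to squared mean, with cross-correlations across $m$ handled by a constant factor), the identity $\sum_m\mathbb E[|r_{i,m}^{k}|^2\mid\mathcal F_k]=\sum_{j\in\mathcal N_i}\Lambda_{i,j}+\sigma^2/E\le\Lambda^*+\sigma^2/E$, and the $\mathcal O(Rd)$ bound on $\Vert\mathbf z_m-\mathbf w_{i,k}\Vert$ for $\mathbf w_{i,k}\in\mathcal W$.

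The hardest step is the bias bound \eqref{L4}, which must exhibit a \emph{linear} $\eta/\gamma$ decay rather than the $\sqrt{\eta/\gamma}$ a naive strong-convexity estimate would give. I would split $\mathbf b=\mathbf W^{(G)}-\mathbf 1\otimes\mathbf w^*$ into $\mathbf b_\parallel=\mathbf J\mathbf b$ and $\mathbf b_\perp=(\mathbf I-\mathbf J)\mathbf b$ with $\mathbf J=\frac1N\mathbf 1\mathbf 1^\top\otimes\mathbf I_d$. Provided (C2) keeps $\mathbf W^{(G)}$ in the interior of $\mathcal W^N$ (a bootstrap closed by \eqref{L4} itself with Assumption~\ref{distance}), the optimality $\nabla G(\mathbf W^{(G)})=\mathbf 0$ decouples: on the disagreement subspace the penalty Hessian $\frac{\gamma\Lambda^*}{\eta}(\mathbf I-\W)$ has smallest eigenvalue $\frac{\gamma\Lambda^*}{\eta}(1-\rho_2)$, forcing $\Vert\mathbf b_\perp\Vert=\mathcal O(\eta/\gamma)$; on the consensus subspace the penalty vanishes, so the fixed-point identity $\frac1N\sum_i\nabla f_i(\mathbf w_i^{(G)})=\mathbf 0$, with $\mu$-strong convexity of $F$ and $L$-smoothness, bounds $\Vert\mathbf b_\parallel\Vert$ by a condition-number multiple of $\Vert\mathbf b_\perp\Vert$. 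Combining the two and collecting the spectral-gap, pathloss and conditioning factors into $Z=\frac{(1-\rho_2)\Lambda^*}{2\sqrt{1+L/\mu}}$ yields \eqref{L4}. The interior bootstrap through (C2) and the consensus/disagreement decoupling that produces the linear (not square-root) rate are the two places I would take the most care.
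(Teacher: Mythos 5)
Your proof of the corollary itself is correct and follows essentially the same route as the paper: substitute $\eta=aK^{-x}$, $\gamma=bK^{-y}$ into \eqref{L1}--\eqref{L4}, observe that the two polynomial terms have exponents $x/2-y$ and $y-x$, balance them at $y=\tfrac34 x$ to get the common exponent $-x/4$, push $x\to 1$ subject to the constraint $x<1$ imposed by the exponential term, and absorb $e^{-\mu aK^{\epsilon}}$ (superpolynomial decay since $\epsilon>0$) into $\mathcal O(K^{-\frac14(1-\epsilon)})$; the triangle inequality \eqref{int30} then combines the pieces. Your explicit check that (C1)--(C2) hold for all sufficiently large $K$ under this schedule is a point the paper leaves implicit, and is a worthwhile addition since Theorem \ref{Tmain} cannot be invoked without it.

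Your supplementary sketches of Theorem \ref{Tmain} (not required for the corollary, which simply invokes it) are sound but diverge from the paper on \eqref{L4}. The paper works with the unconstrained minimizer $\hat{\mathbf W}$ of $G$, applies the multivariate mean value theorem to write $\nabla f(\hat{\mathbf W})=\nabla f(\mathbf 1\otimes\mathbf w^*)+\mathbf A(\hat{\mathbf W}-\mathbf 1\otimes\mathbf w^*)$ with $\mu\mathbf I\preceq\mathbf A\preceq L\mathbf I$, and reduces \eqref{L4} to a resolvent bound on $\mathbf B^{-1}=(\mathbf A+\tfrac{\gamma}{\eta}\Lambda^*(\mathbf I-\W))^{-1}$ restricted to vectors orthogonal to $\mathbf 1\otimes\mathbf I_d$ (proved in the technical report), before closing with the same interior argument via (C2) and Assumption \ref{distance} that you describe. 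You instead split the bias into consensus and disagreement components, bound the disagreement part by the spectral gap $\tfrac{\gamma\Lambda^*}{\eta}(1-\rho_2)$ of the penalty Hessian and the consensus part by a condition-number multiple of it via the fixed-point identity $\sum_i\nabla f_i(\mathbf w_i^{(G)})=\mathbf 0$. Both routes capture the essential mechanism (the gradient at $\mathbf 1\otimes\mathbf w^*$ is orthogonal to the consensus subspace, where the penalty provides the $\gamma/\eta$ stiffness) and both deliver the linear, rather than square-root, $\eta/\gamma$ rate; the paper's resolvent formulation packages the two-subspace interaction into a single matrix bound, whereas your decomposition makes the mechanism explicit at the cost of a bootstrap to control the coupling between the two components, whose constants you would still need to track to recover the stated $Z$.
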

  \vspace{-2mm}
When $\epsilon\to 0$, we can see that the error scales as $\mathcal O(K^{-1/4})$, which is also validated numerically in the next section.
\vspace{-2mm}
\section{Numerical Results}
\label{numres}
We solve the '0 versus 1' task based on the MNIST dataset \cite{MNIST}:
 the goal is to
 distinguish images of digits '0' and '1'.

\emph{Network deployment}: We consider  $N{=}200$ nodes, spread uniformly at random over a region of $3$km radius. The nodes communicate over a bandwidth of $W_{tot}{=}$1MHz, carrier frequency $f_c=3$GHz,
 with a fixed transmission power of $P_{tx}=5$dBm. 
The noise power spectral density at the receivers is $N_0=-169$dBmW/Hz.
The average pathloss $\Lambda_{i,j}$ between node $i$ and $j$ follows Friis'  free space equation.
 
\emph{Data deployment}:
Each node has a local dataset with a single 28x28 pixels image: 100 nodes have digit '0', the remainder have digit '1'.
Node $i$'s image is converted into a 50-dimensional real feature vector
 $\mathbf d_i$,
representing the components (out of a total of $28\times 28=784$) with largest mean energy across the dataset,
  and then
normalized to $\Vert\mathbf d_i\Vert=1$.
We define the label $\ell_i{=}1$ if node $i$'s image is of digit 0, otherwise $\ell_i{=}-1$.
 
 \emph{Optimization problem formulation}:
 We solve the task via regularized logistic regression, with loss function 
\begin{align}
\label{logreg}
f_i(\mathbf w)=\frac{0.01}{2}\Vert\mathbf w\Vert^2+\ln\left(1+\exp\{
-\ell_i\cdot\mathbf d_i^\top\mathbf w\}\right),
\end{align}
 where $\mathbf w\in\mathbb R^d$ is a $d=50$-dimensional parameter vector.
It can be shown that  $f_i(\mathbf w)$, hence the global function $F(\mathbf w){=}\frac{1}{N}\sum_{i=1}^Nf_i(\mathbf w)$,
are all strongly-convex with parameter $\mu{=}0.01$, and smooth with parameter $L=\mu+1/4$.

\emph{Wireless distributed optimization algorithms}:
We implement the following algorithms, all initialized as $\mathbf w_{i,0}=\mathbf 0,\forall i$.

$\bullet$ \underline{\proposed} (proposed): 
To enforce half-duplex constraints, 
100 nodes, selected randomly, transmit in slot one, the others transmit in slot two.
The frame duration is
$T
=102\mu$s.

We also compare the proposed \proposed\ with implementations of DGD over orthogonal digital (OD-DGD) and analog (OA-DGD) channels. Both follow the updates
\begin{align}
\label{ODGD}
\mathbf w_{i,k+1}
&=\Pi[\mathbf c_{i,k}-\eta\nabla f_i(\mathbf w_{i,k})],
\end{align}
where $\mathbf c_{i,k}$ is a reconstruction of $\sum_{j=1}^N\omega_{i,j}\mathbf w_{j,k}$, but  differ in how signals are encoded and transmitted, and $\mathbf c_{i,k}$ is computed:

\begin{figure*}
     \centering
     \hfill
     \begin{subfigure}[b]{0.32\linewidth} 
        \includegraphics[width = \linewidth,trim=15 0 35 20, clip=true]{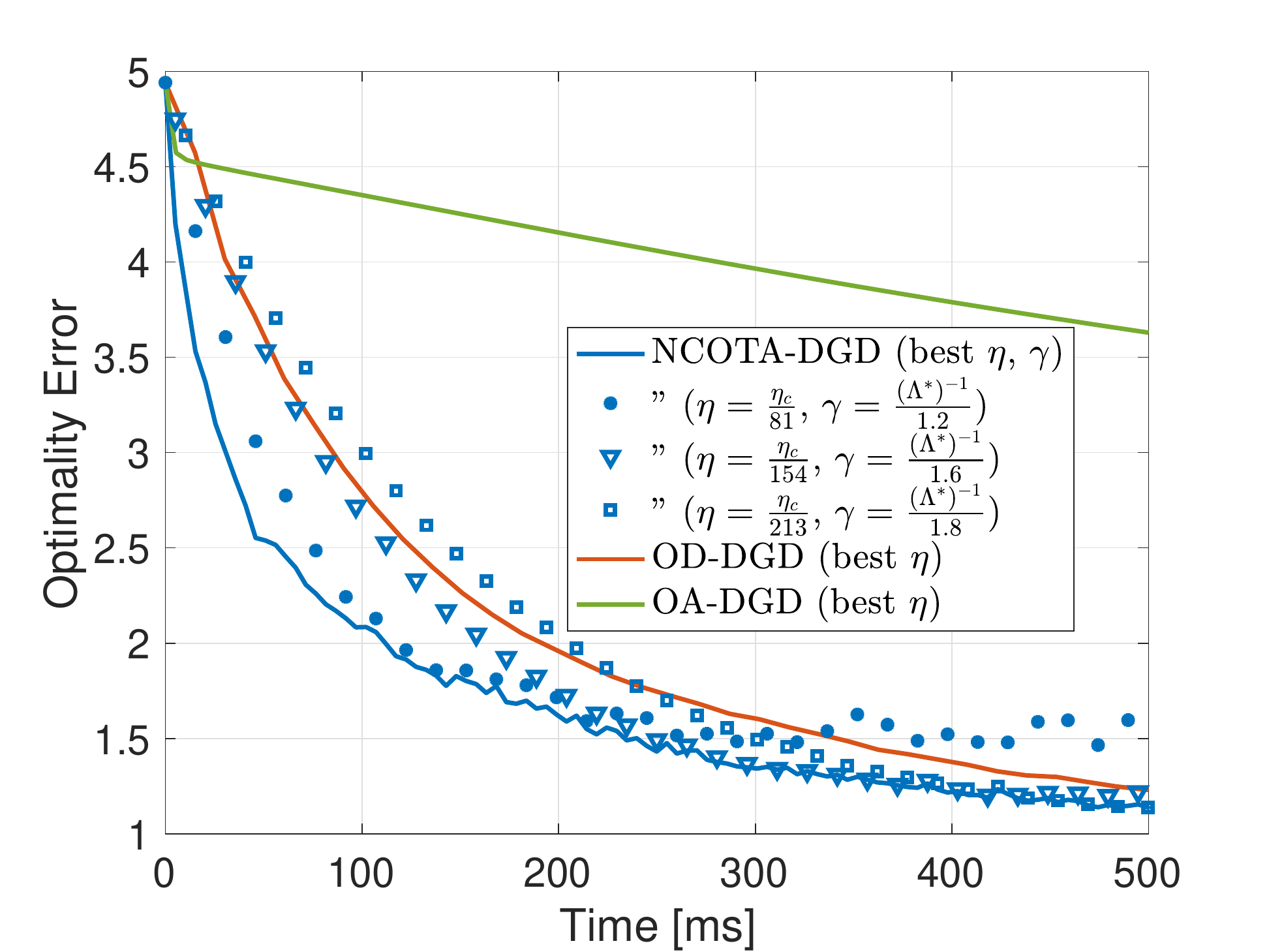}
	    \caption{} \label{fig:vstime}
     \end{subfigure}
     \hfill
     \begin{subfigure}[b]{0.32\linewidth}
\includegraphics[width = \linewidth,trim=15 0 35 20, clip=true]{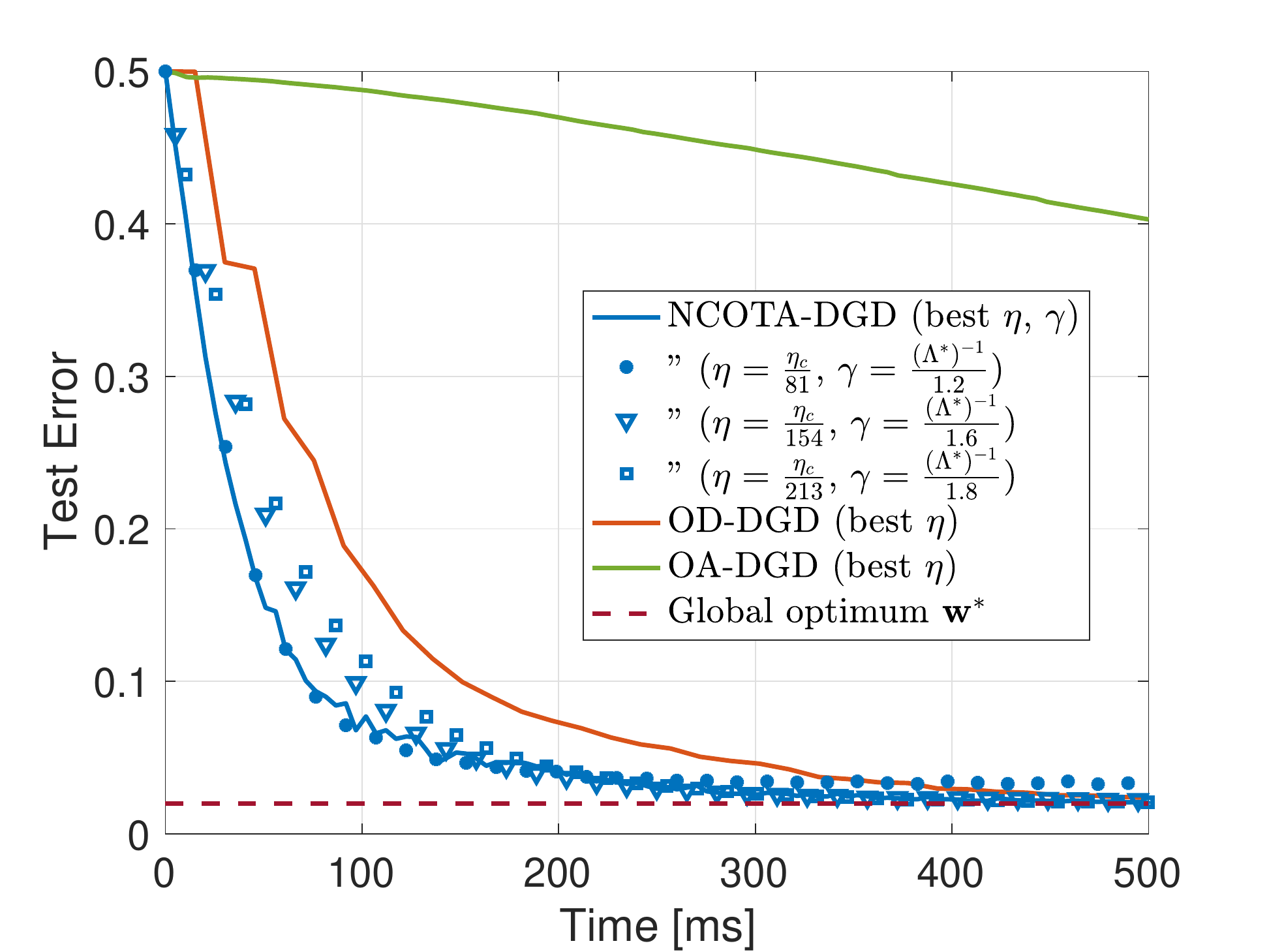}
	    \caption{} \label{fig:Testvstime}
     \end{subfigure}
          \hfill
     \begin{subfigure}[b]{0.34\linewidth}
\includegraphics[width = \linewidth,trim=0 0 0 0, clip=true]{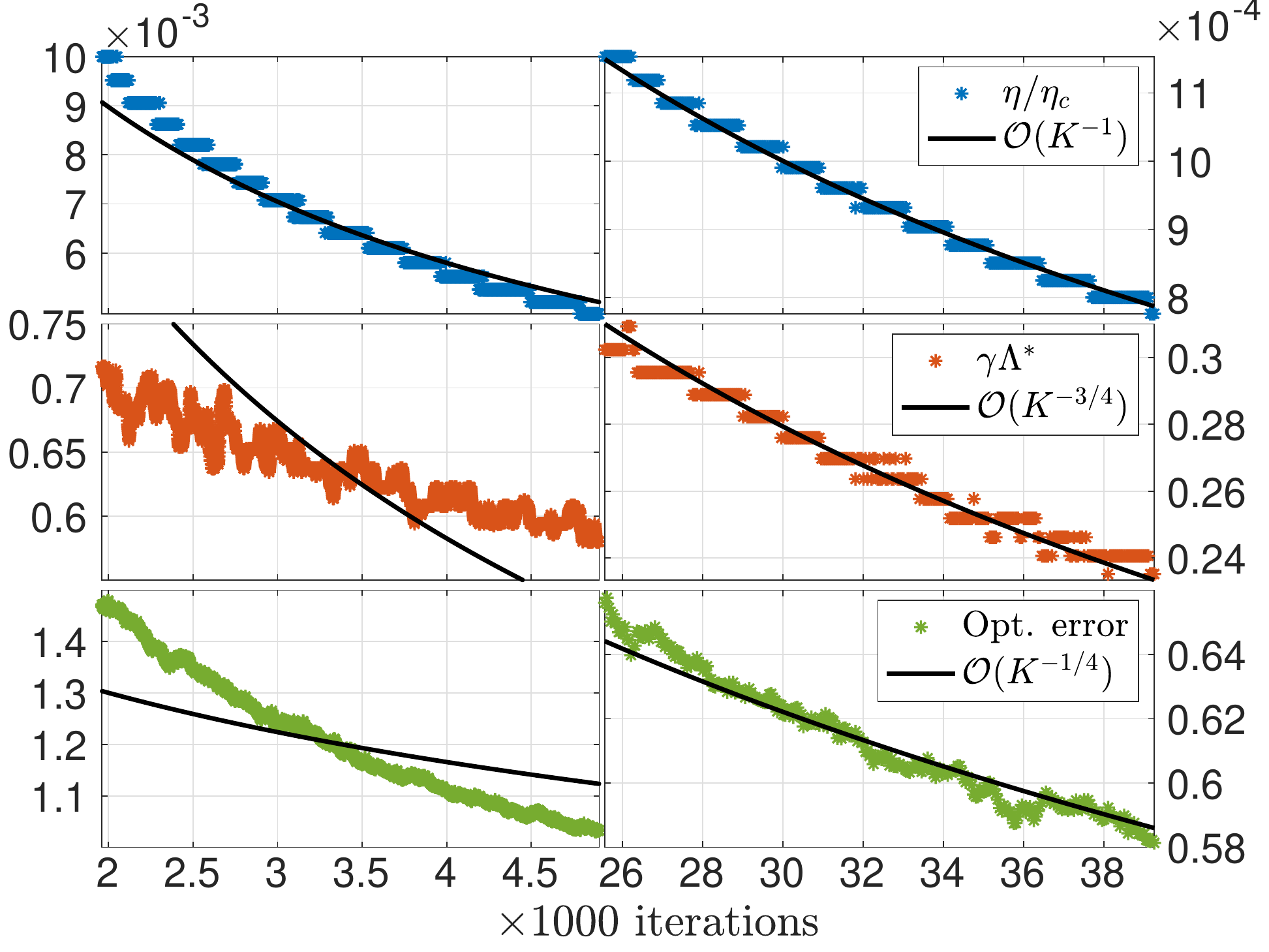}
	    \caption{} \label{fig:fitcurve}
     \end{subfigure}
     \hfill
     \vspace{-4mm}
\caption{Optimality error (a), test error (b), vs running time. Optimal stepsizes and optimality error for \proposed, vs iterations (c).\vspace{-7mm}} 
\end{figure*}

$\bullet$ \underline{Orthogonal Digital DGD (OD-DGD)}: 
each node scales $\mathbf w_{i,k}$ by the largest magnitude of its components, $\Vert\mathbf w_{i,k}\Vert_{\infty}$;
each component of  $\mathbf w_{i,k}/\Vert\mathbf w_{i,k}\Vert_{\infty}$ (each $\in[-1,1]$) is then quantized using 
9 quantization levels uniformly spaced in the interval $[-1,1]$.
We use dithered quantization: letting $\hat{\mathbf w}_{i,k}$ be the quantized signal, it is such that $\mathbb E[\hat{\mathbf w}_{i,k}|{\mathbf w}_{i,k}]={\mathbf w}_{i,k}$.
With $\Vert\mathbf w_{i,k}\Vert_{\infty}$ encoded using machine precision (64 bits), the data payload is thus
$L=64+d\log_2(9)\approx 223$bits to encode the $d$-dimensional signal $\mathbf w_{i,k}$.
Such payload is then transmitted by each node over orthogonal channels (via TDMA), using capacity achieving codes with rate $R$. With the fading channel $h_{i,j}^k\sim\mathcal{CN}(0,\Lambda_{i,j})$ between transmitting node $j$ and receiving node $i$, and assuming CSI at the receiver,
the probability of successful decoding is
$P_{i,j}^{succ}{\triangleq}\exp\{-\frac{\sigma^2}{E\Lambda_{i,j}}(2^R-1)\}$. 
$R$ is chosen to guarantee a minimum $90\%$ success probability for nodes within a 500m radius from the transmitting node, yielding
$R{\approx}2$bits/s/Hz.
The resulting frame duration is $T
\approx  22.67$ms.
At the end of the $N$ transmissions, node $i$ computes
\vspace{-2mm}
$$
\mathbf c_{i,k}=\mathbf w_{i,k}
+\frac{1}{\max_n\sum_{j\neq n}P_{n,j}^{succ}}
\sum_{j\neq i}\iota_{i,j}^k(\hat{\mathbf w}_{j,k}-\mathbf w_{i,k}),
\vspace{-2mm}
$$
where $\iota_{i,j}^k{=}\mathbbm{1}[R{<}\log_2(1+|h_{i,j}^k|^2E/\sigma^2)]$ indicates a successful reception of $\hat{\mathbf w}_{j,k}$ at node $i$.
With this choice of $\mathbf c_{i,k}$, the updates \eqref{ODGD} represent a noisy version of 
\eqref{ddd} with weights $\omega_{i,j}{=}\frac{P_{i,j}^{succ}}{\max_n\sum_{j\neq n}P_{n,j}^{succ}},j{\neq}i$, $\omega_{i,i}{=}1{-}\sum_{j\neq i}\omega_{i,j}$, and the additional projection step (as seen by computing $\mathbb E[\mathbf c_{i,k}|\mathcal F_k]$).

$\bullet$ \underline{Orthogonal Analog DGD (OA-DGD)}: 
$\mathbf w_{i,k}$ is first normalized to unit norm;
the first (respectively, second) half of the normalized vector, $\frac{[\mathbf w_{i,k}]_{1:d/2}}{\Vert\mathbf w_{i,k}\Vert}$ ($\frac{[\mathbf w_{i,k}]_{d/2+1:d}}{\Vert\mathbf w_{i,k}\Vert}$) is mapped to the real (imaginary) part of the baseband transmitted signal as
$$
\mathbf x_{i,k}=\sqrt{E\frac{d/2+2}{3}}\Big[\frac{[\mathbf w_{i,k}]_{1:\frac{d}{2}}{+}\mathrm{j}[\mathbf w_{i,k}]_{\frac{d}{2}+1:d}}{\Vert\mathbf w_{i,k}\Vert};
\frac{\Vert\mathbf w_{i,k}\Vert}{R};1\Big];
$$
note that $\mathbf x_{i,k}$
 includes the norm $\Vert\mathbf w_{i,k}\Vert$ (penultimate sample) and a pilot signal (last sample) to estimate the channel at the receiver.
 This constitutes a $(d/2{+}2)$-dimensional  complex-valued signal, whose energy per sample satisfies
$\frac{1}{d/2+2}\Vert\mathbf x_{i,k}\Vert^2\leq E,$
consistent with the power constraint.
The signal is then transmitted by each node over orthogonal channels (via TDMA), yielding the frame duration
$T=
5.4$ms.
With the received signal
$
\mathbf y_{i,j}^k=h_{i,j}^k\mathbf x_{j,k}+\mathbf n_{i,j}^k,
$
node $i$ first estimates $h_{i,j}^k$ via maximum likelihood from the last sample, followed by the estimation of
$\Vert\mathbf w_{j,k}\Vert$ from the penultimate sample; it then estimates $\mathbf w_{j,k}/\Vert\mathbf w_{j,k}\Vert$ from the first $d/2$ samples.
After receiving the signals from all nodes, and using the reconstructions $\hat{\mathbf w}_{j,k}$ of ${\mathbf w}_{j,k}$, node $i$ then computes
\vspace{-1mm}
$$
\mathbf c_{i,k}=\mathbf w_{i,k}
+\sum_{j\neq i}\frac{\Lambda_{i,j}}{\max_n\sum_{j\neq n}\Lambda_{n,j}}(\hat{\mathbf w}_{j,k}-\mathbf w_{i,k}),
\vspace{-2mm}
$$
so that signals are mixed proportionally to the average pathloss.
With this choice of $\mathbf c_{i,k}$, the updates \eqref{ODGD} represent a noisy version of 
\eqref{ddd} with weights $\omega_{i,j}=\frac{\Lambda_{i,j}}{\max_n\sum_{j\neq n}\Lambda_{n,j}},\ j\neq i$ and $\omega_{i,i}=1-\sum_{j\neq i}\omega_{i,j}$, and the additional projection step.

Note that OD-DGD requires CSI at the receiver and knowledge of $P^{succ}$; OA-DGD requires knowledge of the average pathloss $\Lambda_{i,j}$. 
In this simulation, we idealistically assume that such information is available at no cost.
In contrast, the proposed \proposed\ does not require such knowledge.

\emph{Evaluations and Discussion}:
We evaluate:
(1) the \emph{optimality error} $\sqrt{\hat{\mathbb E}[\frac{1}{N}\sum_{i=1}^N\Vert\mathbf w_{i,k}{-}{\mathbf w}^*\Vert^2]}$,
measuring the deviation of the local models from the solution of \eqref{global} (bounded in expectation in Theorem \ref{Tmain});
(2) the \emph{average test error} $\mathrm{TEST}_k=\hat{\mathbb E}[\frac{1}{N}\sum_{i=1}^N \mathrm{TEST}_{i,k}]$, where
$\mathrm{TEST}_{i,k}$ is the test error for node $i$ at frame $k$. This is computed on a test set of  100 '0's and 100 '1's; the label associated to feature vector $\mathbf d$ is predicted as '0' if $\mathbf w_{i,k}^\top\mathbf d{>}0$, and '1' otherwise.
Here, $\hat{\mathbb E}[\cdot]$ denotes a sample average of 10 trajectories generated by the algorithms, over independent realizations of fading and noise.

In Fig.~\ref{fig:vstime}, we plot
the \emph{best} optimality error
 vis-\`a-vis running time:
 all algorithms are evaluated using a set of fixed stepsizes; for each time instance in the $x$-axis, we plot 
 only the  optimality error evaluated on the best performing stepsize choice at that time. For \proposed, we also show the curves for 3 representative stepsize choices.
 \proposed\ achieves the best performance, followed by OD-DGD and OA-DGD,
thanks to its fast updates: during 500ms, \proposed\ performs $4900$ iterations, versus only 22 of OD-DGD
 and 93 of OA-DGD, which are both limited by the use of orthogonal channels.
 Yet, OD-DGD bridges the gap to \proposed\ over time: this is due to its better noise-suppression capabilities--especially beneficial when approaching convergence to $\mathbf w^*$.
This behavior suggests that a mixed analog-digital strategy may further improve performance--a study left for future work.
 OA-DGD performs the worst:
 it does not enjoy the noise suppression capabilities of OD-DGD, and its updates are $53\times$ slower than \proposed.
 In Fig.~\ref{fig:Testvstime}, we plot
the test error under the same best stepsize choice, vis-\`a-vis running time. It follows a similar trend as Fig.~\ref{fig:vstime}.
Both \proposed\ and OD-DGD converge to the test error under the optimal $\mathbf w^*$.

 In Fig.~\ref{fig:fitcurve}, 
  we plot 
  the best stepsizes $\eta$ (first row), $\gamma$ (second row) and
   best optimality error (third row) of \proposed, for two different phases: 
   \emph{initial phase}, corresponding to the 
   first $\sim$5k iterations ($500$ms, left side); \emph{asymptotic phase}, after 26k iterations (right side). For each phase, we also fit the data points to the theoretical stepsize tuning and convergence behavior found in Corollary \ref{corol}: 
   solid lines correspond to $\eta{\propto}(K{+}\delta)^{{-}1}$, $\gamma{\propto}(K{+}\delta)^{{-}3/4}$ and "Opt. error"${\propto}(K{+}\delta)^{{-}1/4}$,
where the scaling factors ($\propto$) and $\delta$ are fit to the data points.
   We note that, in the initial phase, the optimal $\eta$ and $\gamma$ do not match the theoretical behavior. In fact, in this regime, the optimality error is dominated by \eqref{L4}, 
  and decreases quicker than $\mathcal O(K^{-1/4})$ (bottom left).
  Conversely,  in the asymptotic phase, the optimal $\eta$ and $\gamma$ more closely match the theoretical scaling,
     and the optimality error decays as ${\propto}(K{+}\delta)^{{-}1/4}$, as predicted
    (with $\delta=4227$). This is in line with Corollary \ref{corol} when ${\epsilon}{\to} 0$, and corresponds to a regime when both error terms \eqref{L1}-\eqref{L4} are equally dominant.
\section{Conclusions}
\label{conclu}
We presented \proposed, an implementation of DGD that 
combines local gradient descent with a novel Non-Coherent Over-The-Air consensus scheme to solve 
distributed machine-learning problems over wirelessly-connected systems. 
\proposed\  enables simultaneous transmissions by mapping local optimization signals to a mixture of preamble sequences, and consensus by correlating the received signals with the preamble sequences via non-coherent combining, without explicit knowledge of the mixing weights, nor channel state information.
We proved its convergence properties, both theoretically and numerically, and showed superior performance than
 implementations of DGD over  digital and analog orthogonal channels.

\vspace{0mm}
\appendix
\renewcommand\thesubsection{\thesection.\Roman{subsection}}

\def\thesubsectiondis{Appendix \thesection.\Roman{subsection}:} 

\begin{proof}[Proof sketch of Theorem \ref{Tmain}]
To prove \eqref{L1}, we use
 the fixed-point optimality condition $\mathbf W^{(G)}{=}\Pi^N[\mathbf W^{(G)}
{-}\eta\nabla G(\mathbf W^{(G)})]$, the non-expansive property of projections \cite{bertsekas2003convex},
and take the expectation conditional on $\mathcal F_k$, yielding $\Vert{\mathbf W}_{k+1}-\mathbf W^{(G)}\Vert_{\mathbb E|\mathcal F_k}^2$
$$
{\leq}\Vert\mathbf W_{k}-\mathbf W^{(G)}
{-}\eta(\nabla G(\mathbf W_{k}){-}\nabla G(\mathbf W^{(G)}))
\Vert^2
{+}\gamma^2\Sigma,
$$
where $\Vert\e_k\Vert_{\mathbb E|\mathcal F_k}^2\leq \Sigma$ (Lemma \ref{L0} at the end of the Appendix).
Assumption \ref{fiassumption} implies that $G$ is $\mu$-strongly convex and $L_{G}\triangleq L+\Lambda^*(1-\rho_N)\gamma/\eta$-smooth.
\balance
Then, using \cite[Theorem 2.1.12]{Nesterov2004}, the first term above is further bounded as
$
{\leq}
(1{-}\mu\eta)^2\Vert\mathbf W_k{-}\mathbf W^{(G)}\Vert^2
$
as long as $\eta\leq 2/(\mu+L_{G})$ (equivalent to {\bf C1}),
yielding, after the unconditional expectation, 
$$
\Vert\mathbf W_{k+1}-\mathbf W^{(G)}\Vert_{\mathbb E}^2
\leq
(1-\mu\eta)^2\Vert\mathbf W_{k}-\mathbf W^{(G)}\Vert_{\mathbb E}^2
+\gamma^2\Sigma.
$$
The result \eqref{L1} follows
 after solving the induction, noting that $\Vert\mathbf W_{0}-\mathbf W^{(G)}\Vert\leq \sqrt{N}2R$,
 using the expression of $\Sigma$ in Lemma \ref{L0},
$\mu\eta\leq 1$ (implied by {\bf C1}), $(1-\mu\eta)^{2k}\leq e^{-2\mu\eta k}$,
  and $\sqrt{a+b}\leq\sqrt{a}+\sqrt{b}$ for $a,b\geq 0$.  

Next, we prove \eqref{L4}.
Consider the \emph{unconstrained} minimizer
$
\hat {\mathbf W}\triangleq \arg\min_{\mathbf W\in\mathbb R^{Nd}}G(\mathbf W).
$
Hence, $\mathbf 0{=}\nabla G(\hat {\mathbf W})$ ${=}\nabla f(\hat {\mathbf W}){+}\frac{\gamma}{\eta}\Lambda^*(\mathbf I_{Nd}{-}\W)\hat {\mathbf W}$.
From the multivariate mean value theorem, there exists $\mathbf A$ with $\mu\mathbf I_{Nd}{\preceq}\mathbf A{\preceq} L\mathbf I_{Nd}$ such that 
$
\nabla f(\hat {\mathbf W}){=}\nabla f(\mathbf 1{\otimes}\mathbf w^*){+}\mathbf A(\hat {\mathbf W}{-}\mathbf 1{\otimes}\mathbf w^*).
$
Combining it with $\nabla G(\hat {\mathbf W}){=}\mathbf 0$ yields
$
\Vert\hat {\mathbf W}{-}\mathbf 1\otimes\mathbf w^*\Vert
{=}
\Vert\mathbf B^{-1}\nabla f(\mathbf 1\otimes\mathbf w^*)\Vert,
$
where
$\mathbf B{\triangleq}\mathbf A{+}\frac{\gamma}{\eta}\Lambda^*(\mathbf I_{Nd}{-}\W){\succeq}\mu\mathbf I_{Nd}.$
Note that the optimality condition on $\mathbf w^*$, $\sum_{i=1}^N\nabla f_i(\mathbf w^*){=}0$, implies
 $\nabla f(\mathbf 1{\otimes}\mathbf w^*)\bot(\mathbf 1{\otimes}\mathbf I_d)$; hence we further bound
$$
\Vert\hat {\mathbf W}-\mathbf 1\otimes\mathbf w^*\Vert
\leq
\Vert\nabla f(\mathbf 1\otimes\mathbf w^*)\Vert
\max_{\mathbf v\bot(\mathbf 1\otimes\mathbf I_d):\Vert\mathbf v\Vert=1}\Vert\mathbf B^{-1}\mathbf v\Vert.
$$
Furthermore, 
 $\Vert\nabla f(\mathbf 1\otimes\mathbf w^*)\Vert\leq \sqrt{N}\nabla_{\max}$,
and it can be proved (not shown due to space constraints, see \cite{techreport}) that 
$$
\max_{\mathbf v\bot(\mathbf 1\otimes\mathbf I_d):\Vert\mathbf v\Vert=1}
\Vert\mathbf B^{-1}\mathbf v\Vert
\leq
\frac{
2\sqrt{1+L/\mu}}{\Lambda^*(1-\rho_2)}\frac{\eta}{\gamma},
$$
yielding
$
\Vert\hat {\mathbf W}{-}\mathbf 1{\otimes}\mathbf w^*\Vert
{\leq}
\frac{\sqrt{N}\nabla_{\max}}{Z}\frac{\eta}{\gamma}.
$
Next, we show that $\hat {\mathbf W}{\in}\mathcal W^N$, hence it coincides with
${\mathbf W}^{(G)}$ solution of the constrained problem. 
Since $\mathbf w^*$ is at distance $\zeta$ from the boundary of $\mathcal W$ (Assumption \ref{distance}), it  suffices to show that
$\mathbf w^*$ is closer to $\hat {\mathbf w}_{i}$ than to the boundary of $\mathcal W$, i.e.
 $\Vert\hat {\mathbf w}_{i}{-}\mathbf w^*\Vert{\leq}\zeta,\ \forall i$. This is a direct consequence of {\bf C2}:
$
\Vert\hat {\mathbf w}_{i}{-}\mathbf w^*\Vert{\leq}\Vert\hat {\mathbf W}{-}\mathbf 1{\otimes}\mathbf w^*\Vert
$
$
{\leq}\frac{\sqrt{N}\nabla_{\max}}{Z}\frac{\eta}{\gamma},
$
hence
 $\hat {\mathbf W}{=}{\mathbf W}^{(G)}$ 
and \eqref{L4} follows.
\end{proof}

\vspace{-2mm}
\begin{lemma}
\label{L0}
$
\Vert\e_k\Vert_{\mathbb E|\mathcal F_k}^2\leq
8N[Rd(\Lambda^*+\sigma^2/E)]^2
\triangleq
 \Sigma.
$
\end{lemma}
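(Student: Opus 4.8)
The plan is to reduce the network-level bound to a per-node estimate and then handle each node with a single application of the triangle inequality in the $L^2$ sense. Since $\e_k=\sum_{i=1}^N\mathbf e_i\otimes\e_{i,k}$ with the $\mathbf e_i$ orthonormal, the norm decouples across blocks, $\Vert\e_k\Vert^2=\sum_{i=1}^N\Vert\e_{i,k}\Vert^2$, so taking the conditional expectation gives $\Vert\e_k\Vert_{\mathbb E|\mathcal F_k}^2=\sum_{i=1}^N\Vert\e_{i,k}\Vert_{\mathbb E|\mathcal F_k}^2$. It therefore suffices to show $\Vert\e_{i,k}\Vert_{\mathbb E|\mathcal F_k}\leq 2\sqrt2\,Rd(\Lambda^*+\sigma^2/E)$ for each $i$. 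Subtracting \eqref{Ed} from \eqref{dik}, and noting that $\mathbf z_m$ and $\mathbf w_{i,k}$ are $\mathcal F_k$-measurable, I would write
\begin{align}\nonumber
\e_{i,k}=\sum_{m=1}^M\big(|r_{i,m}^{k}|^2-\mathbb E[|r_{i,m}^{k}|^2|\mathcal F_k]\big)(\mathbf z_m-\mathbf w_{i,k}),
\end{align}
where the $\sigma^2/(M E)$ offsets cancel.

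The key step is to bound $\Vert\e_{i,k}\Vert_{\mathbb E|\mathcal F_k}$ without confronting the cross-correlations among the terms $|r_{i,m}^{k}|^2$ across the preamble index $m$: from \eqref{rxmatched} these are coupled through the \emph{shared} fading coefficients $h_{i,j}^{k}$ (identical for every $m$), so a direct expansion of $\Vert\e_{i,k}\Vert^2$ would force the evaluation of every conditional covariance $\mathrm{Cov}(|r_{i,m}^{k}|^2,|r_{i,m'}^{k}|^2\mid\mathcal F_k)$. To sidestep this, I would invoke the Minkowski inequality for the conditional $L^2$ norm $\Vert\cdot\Vert_{\mathbb E|\mathcal F_k}$, bounding each summand separately; since $\mathbf z_m-\mathbf w_{i,k}$ is deterministic given $\mathcal F_k$, this gives
\begin{align}\nonumber
\Vert\e_{i,k}\Vert_{\mathbb E|\mathcal F_k}\leq\sum_{m=1}^M\Vert\mathbf z_m-\mathbf w_{i,k}\Vert\,\sqrt{\mathrm{Var}(|r_{i,m}^{k}|^2\mid\mathcal F_k)}.
\end{align}
Because $|r_{i,m}^{k}|^2\mid\mathcal F_k$ is exponentially distributed, its standard deviation equals its mean, so $\sqrt{\mathrm{Var}(|r_{i,m}^{k}|^2\mid\mathcal F_k)}=\mathbb E[|r_{i,m}^{k}|^2\mid\mathcal F_k]$, given explicitly by \eqref{Erm}.

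It then remains to bound the two resulting factors. For the mean factor I would sum \eqref{Erm} over $m$ and use $\sum_{m=1}^M[\mathbf p_{j,k}]_m=1$ together with $\sum_{j\in\mathcal N_i}\Lambda_{i,j}\leq\Lambda^*$, obtaining $\sum_{m=1}^M\mathbb E[|r_{i,m}^{k}|^2\mid\mathcal F_k]\leq\Lambda^*+\sigma^2/E$. For the geometric factor I would substitute the explicit codewords $\mathbf z_{d+1}=-R\mathbf 1$ and $\mathbf z_m=2Rd\,\mathbf e_m-R\mathbf 1$ and use $\mathbf w_{i,k}\in\mathcal W$ (guaranteed by the projection) to verify $\Vert\mathbf z_m-\mathbf w_{i,k}\Vert\leq 2\sqrt2\,Rd$ uniformly in $m$, the binding case being $m\leq d$, where one coordinate has magnitude at most $2Rd$ and each of the remaining $d-1$ coordinates has magnitude at most $2R$. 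Combining yields $\Vert\e_{i,k}\Vert_{\mathbb E|\mathcal F_k}\leq 2\sqrt2\,Rd\,(\Lambda^*+\sigma^2/E)$; squaring and summing over the $N$ nodes reproduces $\Sigma$ exactly. The only genuinely delicate point is the cross-correlation across preamble indices, which the Minkowski bound circumvents without any loss in the final constant.
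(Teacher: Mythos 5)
Your proposal is correct and follows essentially the same route as the paper: the same block decomposition over nodes, the same centered rewriting of $\e_{i,k}$, the same Minkowski/triangle inequality in the conditional $L^2$ norm (which is exactly how the paper also sidesteps the cross-covariances among $|r_{i,m}^k|^2$ over $m$), the same use of the exponential distribution to equate standard deviation with the mean \eqref{Erm}, and the same constants. The only cosmetic difference is the geometric bound: you verify $\Vert\mathbf z_m-\mathbf w_{i,k}\Vert\leq 2\sqrt{2}Rd$ coordinate-wise, while the paper bounds it by $\max_{m,m'}\Vert\mathbf z_m-\mathbf z_{m'}\Vert=\sqrt{8}Rd$ using that $\mathbf w_{i,k}$ is a convex combination of the codewords --- identical outcome.
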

\begin{proof}[Proof sketch]
Using \eqref{dik},
we rewrite $\e_{i,k}$ as
\vspace{-1mm}
$$
\e_{i,k}=
\sum_{m=1}^M(|r_{i,m}^{k}|^2-\mathbb E[|r_{i,m}^{k}|^2|\mathcal F_k])(\mathbf z_m-{\mathbf w_{i,k}}).
$$
Using the triangle inequality,
 we bound
 $$
\Vert\e_{i,k}\Vert_{\mathbb E|\mathcal F_k}\leq
\sum_{m=1}^M\mathrm{sd}(|r_{i,m}^{k}|^2|\mathcal F_k)
\Vert\mathbf z_m-{\mathbf w_{i,k}}\Vert,
$$
where $\mathrm{sd}(|r_{i,m}^{k}|^2|\mathcal F_k)$ 
is the standard deviation of $|r_{i,m}^{k}|^2$, conditional on $\mathcal F_k$; since
$|r_{i,m}^{k}|^2|\mathcal F_k$ is exponentially distributed, it 
equals $\mathbb E[|r_{i,m}^{k}|^2|\mathcal F_k]$.
Moreover, 
$\Vert\mathbf z_m{-}{\mathbf w_{i,k}}\Vert{\leq}\max_{m,m'}\Vert\mathbf z_m{-}\mathbf z_{m'}\Vert{=}\sqrt{8}Rd$.
Using \eqref{Erm}, it then follows
$\Vert\e_{i,k}\Vert_{\mathbb E|\mathcal F_k}{\leq}\sqrt{8}Rd[\sum_{j\in\mathcal N_i}\Lambda_{i,j}{+}\frac{\sigma^2}{E}].
$
The result directly follows after using $\sum_{j\in\mathcal N_i}\Lambda_{i,j}\leq \Lambda^*$, squaring both sides and adding over $i$ ($\Vert\e_{k}\Vert_{\mathbb E|\mathcal F_k}^2
=
\sum_{i=1}^N\Vert\e_{i,k}\Vert_{\mathbb E|\mathcal F_k}^2$).
\end{proof}

 \vspace{-2mm}
   \bibliographystyle{IEEEtran}
\bibliography{IEEEabrv,biblio}

\end{document}